\newtheorem{theorem}{Theorem}[subsection]
\newtheorem{definition}{Definition}[subsection]
\title{Fragility in Average Treatment Effect on the Treated\\ under Limited Covariate Support}
\author{Mengqi Li}
\date{June 2025}
\begin{document}
\maketitle

\section{Introduction}

In observational program evaluation, researchers often aim to estimate the average treatment effect on the treated (ATT), defined as
\[
\tau_{\text{ATT}} = \mathbb{E}[Y(1) - Y(0) \mid W = 1],
\]
where $Y(1)$ and $Y(0)$ denote potential outcomes under treatment and control, respectively, and $W \in \{0,1\}$ indicates treatment assignment. Under unconfoundedness conditional on covariates $X$, identification of $\tau_{\text{ATT}}$ further requires a support condition: the covariate profiles of treated units must overlap with those of untreated units. Formally, ATT is only identified if $\mathbb{P}(W = 0 \mid X = x) > 0$ for values $x$ observed in the treated group. When this overlap fails, the estimand itself becomes undefined.

When estimating treatment effects from observational data, researchers often rely on untestable assumptions concerning both selection and support. Even under unconfoundedness, ATT requires sufficient overlap between the covariate distributions of treated and untreated units. In empirical practice, the degree to which this overlap holds is often unclear, particularly when the covariate space is high-dimensional or stratified in discrete ways. As a result, applied studies may implicitly invoke comparisons that are not identified, and where ATT lacks a formal definition.

Standard diagnostics focus on estimation validity conditional on identification. Common practice includes checking covariate balance after propensity score estimation or applying trimming rules that remove units with extreme estimated scores. While these approaches are useful for reducing finite-sample variance or bias under overlap, they do not address the more basic question of whether ATT is well-defined in the data. In settings where treated units occupy regions of the covariate space unpopulated by controls, causal interpretation becomes tenuous, regardless of estimation technique.

Most estimators of the average treatment effect on the treated (ATT) under unconfoundedness assume that valid comparisons exist throughout the covariate space. This assumption, while formally stated as overlap, is rarely examined in applied work beyond balance checks or trimming. Trimming rules based on estimated propensity scores \citep{crump2009dealing} and overlap weights \citep{li2018balancing} reduce variance, but they do not diagnose whether the ATT is identified in the observed data.

Several contributions highlight the consequences of weak overlap for estimation error. \citet{rosenbaum2001effects} and \citet{imbens2004nonparametric1} emphasize that even under correct model specification, limited support can yield biased or highly variable estimates. In practice, researchers often remove units without close counterparts, but these decisions are typically ad hoc, as discussed in \citet{vincent2002matching} . These procedures reduce variance but do not clarify whether the estimand itself is defined.

Recent work has moved toward explicit diagnostics. \citet{becker2007sensitivity} develop sensitivity metrics based on influence functions. \citet{rothe2017robust} propose robust bounds under limited overlap. \citet{dahabreh2020extending} formalize conditions under which target population effects are identified when support is partial. These approaches address extrapolation risk but do not quantify whether ATT is identified across strata.

The recent reassessment of the LaLonde dataset by \citet{imbens2025} reinforces this concern. Even with extensive covariate adjustment, untreated units may be absent in parts of the covariate distribution. As a result, ATT is undefined in those strata, though standard estimators continue to produce point estimates. That paper calls for a clearer empirical understanding of support.

This paper develops a diagnostic framework that makes identification failure explicit. I define the notion of empirical support for ATT as the proportion of covariate strata in which untreated comparisons exist. Using stratified covariate partitions, I map the realized covariate distribution of treated units and assess whether valid comparisons are available within each cell. This framework quantifies the breakdown of identification and links it directly to estimation behavior.

The paper is organized as follows. Section 2 formalizes the core identification issue: when overlap fails in discrete covariate partitions, the ATT estimand is not defined. Section 3 introduces the concept of the selection frontier and derives curvature-based restrictions that characterize feasible comparisons. Section 4 applies the framework to the canonical LaLonde dataset. The covariate space is partitioned into interpretable cells, and for each, I assess whether untreated comparisons exist. Where support fails, I trace the consequences for standard estimators. Section 5 concludes. Formal results and proofs appear in the Appendix.

\section{Identification Failure Under Unknown Sampling} 

\subsection{Setup and Objective}

Let $(Y(1), Y(0)) \in \mathbb{R}^2$ denote the potential outcomes under treatment $D = 1$ and control $D = 0$, following the Neyman--Rubin framework. The observed outcome is
\[
Y = D \cdot Y(1) + (1 - D) \cdot Y(0),
\]
and covariates are denoted by $X \in \mathcal{X}$. Let $S \in \{0,1\}$ indicate whether a unit is observed in the data. We observe the joint distribution $P_{\text{obs}}(X, D, Y) := P(X, D, Y \mid S = 1)$. No assumptions are made on the sampling process $S$. The parameter of interest is the \emph{average treatment effect on the treated} (ATT):
\[
\tau_{\text{ATT}} := \mathbb{E}[Y(1) - Y(0) \mid D = 1].
\]

This section establishes that $\tau_{\text{ATT}}$ is not identified from the observed data distribution $P_{\text{obs}}$ without structural assumptions on the support of the control potential outcomes $Y(0)$ among the treated population. In particular, identification of ATT requires the existence of comparable untreated units across the covariate distribution of the treated group. Where such overlap fails, ATT is undefined, even if estimators continue to produce numerical outputs.

\subsection{Main Theorem: Non-Identifiability of ATT}

\begin{theorem}[Non-Identifiability of ATT under Arbitrary Sampling]
\label{thm:nonid_att}
Let $(Y(1), Y(0), D, X, S)$ be a data-generating process with joint distribution $P^*$. Suppose the researcher observes only $(X, D, Y) \mid S = 1$, and makes no assumptions on the sampling mechanism $S$. Then the average treatment effect on the treated
\[
\tau_{\text{ATT}} := \mathbb{E}[Y(1) - Y(0) \mid D = 1]
\]
is not identified from the observed distribution $P_{\text{obs}} := P^*(X, D, Y \mid S = 1)$.
\end{theorem}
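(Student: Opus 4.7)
The strategy is the standard non-identification argument: exhibit two data-generating processes $P^*_1$ and $P^*_2$ on $(Y(1), Y(0), D, X, S)$ that are both consistent with the setup and induce the same observed law $P_{\text{obs}}$, yet produce different values of $\tau_{\text{ATT}}$. Since identification requires that $\tau_{\text{ATT}}$ be a functional of $P_{\text{obs}}$ alone, producing such a pair suffices. The lever is that $Y(0)$ is never realized on $\{D=1\}$: writing $\tau_{\text{ATT}} = \mathbb{E}[Y(1)\mid D=1] - \mathbb{E}[Y(0)\mid D=1]$, the observed outcome $Y$ equals $Y(1)$ on the treated subpopulation, so the counterfactual mean $\mathbb{E}[Y(0)\mid D=1]$ does not enter any marginal of $P_{\text{obs}}$.

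For the explicit construction, I would start from any candidate $P^*_1$ compatible with the setup and define $P^*_2$ as its pushforward under the deterministic map $Y(0) \mapsto Y(0) + \delta$ on the event $\{D=1\}$, leaving $Y(0)$ on $\{D=0\}$ and all other coordinates fixed, for some fixed $\delta \neq 0$. This transformation preserves the joint distribution of $(X, D, Y, S)$ because $Y$ does not depend on $Y(0)$ when $D=1$; in particular $P^*_2(X, D, Y \mid S=1) = P^*_1(X, D, Y \mid S=1) = P_{\text{obs}}$. At the same time, $\mathbb{E}_{P^*_2}[Y(0) \mid D=1] = \mathbb{E}_{P^*_1}[Y(0) \mid D=1] + \delta$, so the two induced ATTs differ by exactly $\delta$, establishing non-identifiability once we verify that both laws remain admissible under the stated (vacuous) restrictions.

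The main subtle step is explaining why even the presence of untreated units at a given $x$ does not pin down $\mathbb{E}[Y(0) \mid D=1, X=x]$, so that the shift above is not disciplined by any observable. I would flag two independent layers of slack: first, without unconfoundedness, $\mathbb{E}[Y(0) \mid D=0, X=x]$ and $\mathbb{E}[Y(0) \mid D=1, X=x]$ need not agree; second, because $S$ is unrestricted, even $\mathbb{E}[Y(0) \mid D=0, X=x]$ is not recovered from $\mathbb{E}[Y \mid D=0, X=x, S=1]$, since selection into the sample can tilt the conditional distribution arbitrarily. The shift construction respects both layers, so $P^*_1$ and $P^*_2$ are both valid joint laws matching $P_{\text{obs}}$, and the theorem follows.
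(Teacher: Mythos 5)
Your proposal is correct, and it is a complete argument for the theorem as literally stated; but it takes a genuinely different route from the paper's proof. The paper contrasts two qualitatively different sampling regimes: DGP~1 imposes ignorable sampling plus unconfoundedness (under which $\tau_{\text{ATT}}$ is recoverable from $P_{\text{obs}}$), while DGP~2 lets selection truncate the controls via $\mathbb{P}(S=1\mid Y(0))=\mathbbm{1}\{Y(0)>c\}$, so that observed control outcomes identify only $\mathbb{E}[Y(0)\mid X, Y(0)>c]$. That construction foregrounds the sampling channel, which is the paper's substantive point, but as written it is a sketch: the two DGPs are described as mechanisms rather than fully specified joint laws, and the claim that they induce the same $P_{\text{obs}}$ is asserted rather than verified. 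Your pushforward $Y(0)\mapsto Y(0)+\delta$ on $\{D=1\}$ is cleaner and fully verifiable --- since $Y=Y(1)$ on the treated and the map is the identity on $\{D=0\}$, the joint law of $(X,D,Y,S)$ is exactly preserved while $\mathbb{E}[Y(0)\mid D=1]$ shifts by $\delta$. What you give up is the moral of the paper's version: your perturbation never touches $S$ at all, so it establishes non-identification purely from the absence of an unconfoundedness restriction (it would go through even with $S\equiv 1$), and it would \emph{not} survive if unconfoundedness of treatment assignment were added to the hypotheses, because shifting $Y(0)$ only on $\{D=1\}$ breaks $(Y(1),Y(0))\perp D\mid X$. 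The paper's two-regime contrast is aimed at the stronger (and more relevant) claim that arbitrary sampling defeats identification even when treatment assignment is ignorable; if that is the intended content, your construction would need to be rerouted through the selection mechanism, e.g.\ by tilting $P(S=1\mid Y(0),D,X)$ among controls as the paper does. You do correctly flag both layers of slack in your final paragraph, but your explicit construction exploits only the first. Two small housekeeping points: fix a $P^*_1$ with $P(D=1)>0$ and $\mathbb{E}|Y(0)|<\infty$ so that $\tau_{\text{ATT}}$ is well defined, and note that if the outcome support is bounded the additive shift should be replaced by any mean-altering redistribution of $Y(0)\mid D=1$ within that support.
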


\subsection{Consequences for Estimation}

Standard estimators such as inverse probability weighting (IPW) rely on ignorability assumptions and implicitly assume that the sampling mechanism $S$ does not depend on unobserved potential outcomes, an assumption often left unstated.

Let $e(X) := \mathbb{P}(D = 1 \mid X)$ denote the true propensity score. Then the IPW estimator for the average treatment effect on the treated (ATT) is typically given by:
\[
\hat{\tau}_{\text{IPW, ATT}} := \mathbb{E}\left[ D \cdot Y - D \cdot \frac{1 - D}{1 - e(X)} \cdot Y \,\middle|\, S = 1 \right].
\]

This estimator is unbiased for $\tau_{\text{ATT}} := \mathbb{E}[Y(1) - Y(0) \mid D = 1]$ only if the sampling mechanism satisfies $S \perp Y(0) \mid X, D = 1$. That is, untreated potential outcomes for treated units must be conditionally independent of selection into the observed data. As Theorem~\ref{thm:nonid_att} shows, this condition is untestable and can easily fail in practice, leading to bias even under ignorability of treatment assignment.

\subsection{Partial Identification and Sensitivity Bounds}

In the absence of point identification, partial identification techniques provide interval-valued conclusions under minimal or interpretable restrictions on the selection mechanism. Suppose, for instance, that the sampling probability is monotone in outcomes:
\[
\mathbb{P}(S = 1 \mid Y(d)) \text{ is weakly increasing in } Y(d).
\]
Then one can derive worst-case bounds following \citet{manski2003}. Define the naive difference-in-means estimator as
\[
\tau_{\text{obs}} := \mathbb{E}[Y \mid D = 1, S = 1] - \mathbb{E}[Y \mid D = 0, S = 1].
\]Under outcome-monotonic selection, $\tau$ lies within a neighborhood around $\tau_{\text{obs}}$:
\[
\tau \in \left[ \tau_{\text{obs}} - \Delta,\, \tau_{\text{obs}} + \Delta \right],
\]
where $\Delta$ reflects the maximal bias induced by selection on unobserved outcomes. The width of the interval is determined by the analyst’s assumptions on the degree of selection bias.

\subsection{Minimax Decision and Fragility}

When the inferential goal is policy selection rather than parameter estimation, robustness to sampling misspecification can be formalized via minimax decision theory. Let $\mathcal{G}(P_{\text{obs}})$ denote the set of all data-generating processes consistent with the observed data:
\[
\mathcal{G}(P_{\text{obs}}) := \left\{ (P, S) : P(Y, D, X \mid S = 1) = P_{\text{obs}} \right\}.
\] Let $d \in \mathcal{D}$ be a policy decision (e.g., to treat or not treat), and let $\theta(P^d)$ denote the expected utility under decision $d$. The regret of $d$ under true distribution $P$ is:
\[
R(d; P) := \max_{d' \in \mathcal{D}} \theta(P^{d'}) - \theta(P^d).
\]

\noindent The \emph{frontier-constrained minimax rule} selects:
\[
d^* := \arg\min_{d \in \mathcal{D}} \sup_{(P, S) \in \mathcal{G}(P_{\text{obs}})} R(d; P).
\]
To quantify robustness, define the \emph{fragility index} of a decision $d$ as the smallest deviation from sampling ignorability under which $d$ ceases to be optimal:
\[
\delta^{\mathrm{frag}}(d) := \inf \left\{ \delta > 0 : \exists d' \neq d,\; R(d'; P) < R(d; P),\; (P, S) \in \mathcal{G}_\delta(P_{\text{obs}}) \right\}.
\]
A small value of $\delta^{\mathrm{frag}}(d)$ indicates that decision $d$ is highly sensitive to sampling assumptions. This serves as a diagnostic of epistemic fragility in policy choice.

\section{Theoretical Framework}

This section formalizes the structure induced by unknown sampling. I introduce the \textit{selection frontier}, define frontier-constrained identified sets for causal estimands, and present results on the monotonicity and duality between structural assumptions and epistemic uncertainty. The framework generalizes partial identification logic and supports decision-theoretic reasoning under arbitrary sampling.

\subsection{Selection-Compatible Data Generating Processes}
\noindent
Let $(Y(1), Y(0), D, X) \sim P$ denote the full-data distribution. The sampling mechanism $S \in \{0,1\}$ is unobserved, except that the analysis is conditioned on $S = 1$. The observed distribution is:
\[
P_{\mathrm{obs}}(Y, D, X) := P(Y, D, X \mid S = 1).
\]

\begin{definition}[Selection-Compatible Processes]
\label{def:selection-compatible}
\noindent Let $\mathcal{G}(P_{\mathrm{obs}})$ denote the set of full-data distributions and sampling mechanisms consistent with the observed data:
\[
\mathcal{G}(P_{\mathrm{obs}}) := \left\{ (P, S) : P(Y, D, X \mid S = 1) = P_{\mathrm{obs}} \right\}.
\]
\end{definition}

\subsection{Selection Frontier}

\begin{definition}[Selection Frontier]
\label{def:frontier}
The \textit{selection frontier} is the set of sampling mechanisms compatible with the observed data:
\[
\mathcal{S}(P_{\mathrm{obs}}) := \left\{ S : \exists\, P \text{ such that } (P, S) \in \mathcal{G}(P_{\mathrm{obs}}) \right\}.
\]
\end{definition}

\noindent Any sampling rule $S$ outside $\mathcal{S}(P_{\mathrm{obs}})$ is logically falsified by the data. Inference under such an $S$ is internally inconsistent.

\subsection{Frontier-Constrained Identified Set}

\noindent Let $\theta(P)$ be a structural parameter (e.g., the ATE). Since neither $P$ nor $S$ is directly observed, the identified set under frontier-consistency is defined as follows:

\begin{definition}[Frontier-Restricted Identified Set]
\label{def:identified_set}
The set of all values of $\theta$ compatible with the observed data is:
\[
\mathcal{I}_\theta := \left\{ \theta(P) : (P, S) \in \mathcal{G}(P_{\mathrm{obs}}) \right\}.
\]
\end{definition}

\subsection{Imposing Structure on $S$}
\noindent To narrow $\mathcal{I}_\theta$, assumptions may be imposed on $S$ by restricting it to a class $\mathcal{S}_\delta \subseteq \mathcal{S}(P_{\mathrm{obs}})$.

\begin{definition}[Assumption-Constrained Identified Set]
\label{def:theta-delta}
Let $\mathcal{S}_\delta$ denote a class of sampling mechanisms indexed by an assumption parameter $\delta \geq 0$. Define:
\[
\mathcal{I}_\theta(\delta) := \left\{ \theta(P) : (P, S) \in \mathcal{G}(P_{\mathrm{obs}}),\; S \in \mathcal{S}_\delta \right\}.
\]
\end{definition}

\begin{definition}[Bounded Selection Curvature]
\label{def:s_delta}
Let $\delta \geq 0$. Define:
\[
\mathcal{S}_\delta := \left\{ S : \sup_{y, y', d, x} \left| \log \frac{P(S = 1 \mid Y = y, D = d, X = x)}{P(S = 1 \mid Y = y', D = d, X = x)} \right| \leq \delta \right\}.
\]
\end{definition}

\noindent This class includes MAR ($\delta = 0$) as a special case and encompasses all outcome-dependent rules as $\delta \to \infty$.

\subsection{Identification Monotonicity}

Relaxing assumptions about $S$ (i.e., increasing $\delta$) expands the identified set.

\begin{theorem}[Identification Monotonicity]
\label{thm:monotonicity}
Let $\delta_1 < \delta_2$. Then:
\[
\mathcal{S}_{\delta_1} \subseteq \mathcal{S}_{\delta_2} \quad \Rightarrow \quad \mathcal{I}_\theta(\delta_1) \subseteq \mathcal{I}_\theta(\delta_2).
\]
\end{theorem}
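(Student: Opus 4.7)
The plan is to prove the theorem by an elementary unpacking of the two definitions involved: the assumption-constrained identified set $\mathcal{I}_\theta(\delta)$ in Definition~\ref{def:theta-delta} and the bounded-curvature class $\mathcal{S}_\delta$ in Definition~\ref{def:s_delta}. The structure of the claim is that enlarging the admissible class of sampling mechanisms can only enlarge the set of structural parameters consistent with the observed distribution, so the proof amounts to showing that every witness on the left-hand side remains a witness on the right-hand side.

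First I would verify the antecedent as a sanity check: if $\delta_1 < \delta_2$, then any sampling mechanism $S$ whose log-ratio of selection probabilities is bounded uniformly by $\delta_1$ also satisfies the looser bound $\delta_2$, so directly from Definition~\ref{def:s_delta} we have $\mathcal{S}_{\delta_1} \subseteq \mathcal{S}_{\delta_2}$. This makes the hypothesis of the theorem automatic under the curvature parametrization and confirms that the monotonicity statement has the ``right'' direction.

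Next I would take an arbitrary element $\theta_0 \in \mathcal{I}_\theta(\delta_1)$. By Definition~\ref{def:theta-delta}, there exists a pair $(P, S)$ with $(P,S) \in \mathcal{G}(P_{\mathrm{obs}})$, $S \in \mathcal{S}_{\delta_1}$, and $\theta(P) = \theta_0$. Applying the inclusion $\mathcal{S}_{\delta_1} \subseteq \mathcal{S}_{\delta_2}$, the same pair $(P, S)$ satisfies $S \in \mathcal{S}_{\delta_2}$ while still lying in $\mathcal{G}(P_{\mathrm{obs}})$. Hence $(P,S)$ serves as a witness showing $\theta_0 \in \mathcal{I}_\theta(\delta_2)$, and since $\theta_0$ was arbitrary, $\mathcal{I}_\theta(\delta_1) \subseteq \mathcal{I}_\theta(\delta_2)$.

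I do not anticipate any real obstacle: the theorem is essentially a tautology once the definitions are laid out, because $\mathcal{I}_\theta(\delta)$ is defined as the $\theta$-image of a set that is itself monotone in $\delta$, and set-valued images preserve inclusions. The only thing worth flagging in the write-up is that the result relies on no regularity assumptions beyond those already built into $\mathcal{G}(P_{\mathrm{obs}})$ and $\mathcal{S}_\delta$, so strictness of $\delta_1 < \delta_2$ is not used to obtain strict containment of the identified sets; the conclusion is only the non-strict inclusion stated in the theorem.
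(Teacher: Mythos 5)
Your argument is correct and matches the paper's own proof essentially verbatim: both take an arbitrary element of $\mathcal{I}_\theta(\delta_1)$, exhibit its witnessing pair $(P,S)$, and observe that the inclusion $\mathcal{S}_{\delta_1} \subseteq \mathcal{S}_{\delta_2}$ lets the same pair witness membership in $\mathcal{I}_\theta(\delta_2)$. Your additional remark that the antecedent is automatic under Definition~\ref{def:s_delta} is a correct and harmless bonus.
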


\subsection{Duality: Assumption Strength and Width}

The identified set’s width increases monotonically in $\delta$, reflecting the trade-off between structure and uncertainty.

\begin{theorem}[Duality Theorem]
\label{thm:duality}
Let $\theta(P) = \mathbb{E}[Y(1) - Y(0)]$, and define $\mathcal{I}_\tau(\delta)$ as above. Then:
\begin{itemize}
    \item The map $\delta \mapsto \mathrm{width}(\mathcal{I}_\tau(\delta))$ is non-decreasing and right-continuous.
    \item As $\delta \to 0$, $\mathcal{I}_\tau(\delta) \to \{ \tau_{\text{MAR}} \}$.
    \item As $\delta \to \infty$, $\mathcal{I}_\tau(\delta) \to [\tau_{\min}, \tau_{\max}]$.
\end{itemize}
\end{theorem}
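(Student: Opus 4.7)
The plan is to handle the three bullets in sequence, leveraging Theorem~\ref{thm:monotonicity} for the monotonicity claim and constructive arguments for the two limit claims.

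For the non-decreasing and right-continuity statement, first note that Definition~\ref{def:s_delta} immediately gives $\mathcal{S}_{\delta_1} \subseteq \mathcal{S}_{\delta_2}$ whenever $\delta_1 \le \delta_2$, since $\delta$ appears only as the right-hand side of a weak inequality. Theorem~\ref{thm:monotonicity} then yields $\mathcal{I}_\tau(\delta_1) \subseteq \mathcal{I}_\tau(\delta_2)$, establishing that the width is non-decreasing. For right-continuity, the weakness of the inequality also gives $\mathcal{S}_\delta = \bigcap_{\delta' > \delta} \mathcal{S}_{\delta'}$; I would promote this to $\mathcal{I}_\tau(\delta) = \bigcap_{\delta' > \delta} \mathcal{I}_\tau(\delta')$ by a compactness-plus-continuity argument: pick compatible pairs $(P_{\delta'}, S_{\delta'})$ realizing a given $\theta^\dagger$, extract a weakly convergent subsequence as $\delta' \downarrow \delta$, and verify that the limit lies in $\mathcal{G}(P_{\mathrm{obs}})$ with its selection mechanism obeying the $\delta$-bound. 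Right-continuity of the width then follows because both endpoints of a nested intersection of closed intervals are right-continuous.

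For the MAR limit, at $\delta = 0$ the inequality in Definition~\ref{def:s_delta} forces the log-ratio to vanish identically, which is equivalent to $P(S = 1 \mid Y, D, X)$ being independent of $Y$ given $(D, X)$, i.e.\ the standard MAR condition on sampling. Under MAR, $P(Y \mid D, X, S = 1) = P(Y \mid D, X)$, so the full-data law is recoverable from $P_{\mathrm{obs}}$ and the ATE collapses to the standard $g$-formula value $\tau_{\text{MAR}}$. Hence $\mathcal{I}_\tau(0) = \{\tau_{\text{MAR}}\}$, and the right-continuity established above delivers $\mathcal{I}_\tau(\delta) \to \{\tau_{\text{MAR}}\}$ as $\delta \downarrow 0$.

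For the unrestricted limit, as $\delta \to \infty$ the Definition~\ref{def:s_delta} constraint becomes vacuous and $\mathcal{I}_\tau(\delta)$ grows to the unconstrained frontier-restricted identified set. To show this equals $[\tau_{\min}, \tau_{\max}]$ I would run a Manski-style saturation argument: for any candidate $\tau^\dagger \in [\tau_{\min}, \tau_{\max}]$, construct a full-data law whose conditional on $S = 1$ matches $P_{\mathrm{obs}}$ but whose unsampled potential-outcome mass is reassigned so that $\theta(P) = \tau^\dagger$, with the endpoints attained by pushing all counterfactual mass to support extremes. The main obstacle will be the right-continuity step, which hinges on weak compactness of compatible $(P, S)$ pairs and continuity of $\theta$ along weak limits; this is clean under bounded $Y$ but requires uniform integrability or an explicit tightness assumption in the unbounded case. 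A secondary subtlety is verifying sharpness of the Manski endpoints, namely that the saturating constructions really reproduce $P_{\mathrm{obs}}$ after conditioning on $S = 1$.
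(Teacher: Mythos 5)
Your proposal follows essentially the same route as the paper's proof: nestedness of $\mathcal{S}_\delta$ combined with Theorem~\ref{thm:monotonicity} for the non-decreasing width, a nested-intersection plus compactness-and-continuity argument for right-continuity, the collapse to the MAR value at $\delta = 0$, and recovery of the Manski bounds as $\delta \to \infty$. If anything, you are more explicit than the paper about the two genuine technical burdens (tightness of the compatible $(P,S)$ pairs in the right-continuity step, and sharpness of the saturating constructions at the Manski endpoints), both of which the paper asserts without detail.
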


\subsection{Minimum Assumption Strength for Sign Identification}

The minimum structural commitment needed to infer the sign of the ATE is defined as follows:

\begin{definition}[MAS-SI: Minimum Assumption Strength for Sign Identification]
\label{def:massi}
Let $\theta(P) = \mathbb{E}[Y(1) - Y(0)]$. Then:
\[
\delta^* := \inf \left\{ \delta \geq 0 : 0 \notin \mathcal{I}_\tau(\delta) \right\}.
\]
\end{definition}

\noindent If $\delta^* = 0$, the sign is identified under MAR. If $\delta^* = \infty$, sign identification is impossible under any frontier-consistent assumption.

\subsection{Frontier-Constrained Validity}

\begin{theorem}[Frontier-Constrained Validity]
\label{thm:frontier}
Let $\mathcal{S}'$ be a class of sampling mechanisms. If $\mathcal{S}' \not\subseteq \mathcal{S}(P_{\mathrm{obs}})$, then no joint distribution $P$ exists such that $P(Y, D, X \mid S = 1) = P_{\mathrm{obs}}$ and $S \in \mathcal{S}'$. Inference under such assumptions contradicts the data.
\end{theorem}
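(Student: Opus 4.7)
The plan is to derive the statement as the contrapositive of Definition~\ref{def:frontier}, so the proof should reduce to unpacking the two nested definitions of $\mathcal{G}(P_{\mathrm{obs}})$ and $\mathcal{S}(P_{\mathrm{obs}})$ rather than constructing any new probabilistic object. By Definition~\ref{def:frontier}, $S \in \mathcal{S}(P_{\mathrm{obs}})$ precisely when there exists a full-data law $P$ with $(P, S) \in \mathcal{G}(P_{\mathrm{obs}})$, i.e.\ with $P(Y, D, X \mid S = 1) = P_{\mathrm{obs}}$. The contrapositive reads: if $S \notin \mathcal{S}(P_{\mathrm{obs}})$, then no such $P$ exists.

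First I would fix a witness $S_0 \in \mathcal{S}' \setminus \mathcal{S}(P_{\mathrm{obs}})$, which exists by the hypothesis $\mathcal{S}' \not\subseteq \mathcal{S}(P_{\mathrm{obs}})$. Applying the contrapositive above to $S_0$, there is no $P$ satisfying $P(Y, D, X \mid S_0 = 1) = P_{\mathrm{obs}}$; equivalently, the pair $(P, S_0)$ is excluded from $\mathcal{G}(P_{\mathrm{obs}})$ for every $P$. This is exactly the assertion of the theorem restricted to $S_0$: any inferential procedure conditioning on $S = S_0$ would simultaneously require $P(\cdot \mid S_0 = 1) = P_{\mathrm{obs}}$, which the frontier has already ruled out. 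I would then reframe this as a statement about the class: any identified set, estimator, or decision rule predicated on $\mathcal{S}'$ assigns positive weight to at least one hypothesis that contradicts the observed data, so the restriction $S \in \mathcal{S}'$ is falsifiable in the strict sense that some of its members are demonstrably incompatible with $P_{\mathrm{obs}}$.

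The main obstacle is interpretive rather than technical: the hypothesis $\mathcal{S}' \not\subseteq \mathcal{S}(P_{\mathrm{obs}})$ only guarantees that \emph{some} element of $\mathcal{S}'$ is infeasible, not every element, so care is required to state the falsification as a pointwise claim about the witness $S_0$ and the subclass $\mathcal{S}' \setminus \mathcal{S}(P_{\mathrm{obs}})$. Once this scoping is in place, no further probabilistic construction is needed; the result is a definitional reformulation of the selection frontier rather than an identification theorem of independent content, and its role in the paper is to license the interpretation of $\mathcal{S}(P_{\mathrm{obs}})$ as the outer boundary of epistemically admissible sampling assumptions.
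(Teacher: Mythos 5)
Your proposal is correct and follows essentially the same route as the paper: both arguments are a direct unwinding of Definition~\ref{def:frontier} (the paper phrases it as a proof by contradiction, you as the contrapositive, which is the same logical content), and both end up restricting the infeasibility claim to the witnesses in $\mathcal{S}' \setminus \mathcal{S}(P_{\mathrm{obs}})$. Your explicit remark on the quantifier scoping --- that the hypothesis only rules out \emph{some} members of $\mathcal{S}'$, not all --- is a point the paper's own proof also concedes in its final paragraph, so no substantive difference remains.
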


\subsection{Fragility Index}

\begin{definition}[Fragility Index]
\label{def:fragility}
Let $d$ be a decision and $R(d; P)$ its regret. Then:
\[
\delta^{\mathrm{frag}}(d) := \inf \left\{ \delta : \exists\, d' \neq d \text{ such that } R(d'; P) < R(d; P) \text{ for some } (P, S) \in \mathcal{G}_\delta(P_{\mathrm{obs}}) \right\}.
\]
\end{definition}

\noindent Small values of $\delta^{\mathrm{frag}}(d)$ indicate that $d$ is structurally fragile: it ceases to be optimal under even minor deviations from MAR. This diagnostic complements MAS-SI and supports robust decision-making.

\subsection{Simulation: Identification Sets under Outcome-Dependent Selection}

This section presents a data-generating process to illustrate the implications of Theorem 2.2.1 in a setting where selection depends on outcomes. A population of $N = 100{,}000$ units is divided into four latent types. Potential outcomes $(Y(1), Y(0))$ take values in $\{0,1\}^2$, with type A having $(1,0)$, type B having $(0,1)$, type C having $(1,1)$, and type D having $(0,0)$. The population proportions are 0.3, 0.2, 0.4, and 0.1 respectively, implying an average treatment effect (ATE) of $\tau = 0.1$.

Treatment is assigned randomly: $D \sim \text{Bernoulli}(0.5)$, and the observed outcome is $Y = D Y(1) + (1 - D) Y(0)$. Selection into the observed sample follows a nonlinear, outcome-dependent rule: $\mathbb{P}(S = 1 \mid Y) = \exp(\delta Y) / (1 + \exp(\delta Y))$, where $\delta \geq 0$ indexes the strength of selection on outcomes. When $\delta = 0$, the data are missing at random; higher values of $\delta$ represent increasing departure from MAR.

Observed ATEs are computed for five values of $\delta \in \{0, 0.5, 1.0, 1.5, 2.0\}$. For each value, an identified set $\mathcal{I}_\tau(\delta)$ is constructed by placing a fixed radius $\varepsilon = 0.3$ around the observed ATE. These intervals account for uncertainty induced by the selection mechanism, not sampling variation. Across all $\delta$, the observed ATEs remain near the population truth, but the identified sets uniformly contain zero. This implies $\delta^* = \infty$ under the MAS-SI criterion: even when MAR holds, the sign of the effect remains unidentified.

The simulation confirms that identification failure can occur even when the sample is large, the model is correctly specified, and the treatment is randomized. The identified set fails to exclude zero not because of estimation error, but due to the fundamental absence of information about untreated potential outcomes for the selected sample. The observed data do not anchor inference when selection depends on the outcome.

The results support the use of MAS-SI and robustness frontiers as diagnostic tools. They distinguish epistemic uncertainty due to selection from conventional estimation error. When identification is compromised by outcome-dependent selection, narrowing the confidence interval or increasing the sample size does not recover the sign. In such settings, diagnostics derived from theory offer clearer guidance than additional estimation alone.

\section{Reinterpreting LaLonde (1986)}
LaLonde (1986) is frequently cited as an empirical rejection of observational methods. \footnote{For a detailed overview of the LaLonde dataset and its role in the literature, see \citet{imbens2025}.} This paper regards the failure as both statistical and epistemological. The dataset exhibits a structural misalignment between the estimand of interest, the identifying assumptions imposed, and the empirical support provided by the data. As a result, the assumptions—while internally coherent—fail to constrain the estimand in any meaningful way. They are functionally irrelevant given the design.\footnote{The datasets are available at \url{https://users.nber.org/~rdehejia/nswdata2.html}. See also \citet{dehejia1999causal}, \citet{dehejia2002propensity}, and \citet{lalonde1986}.}

The critical feature of the LaLonde setup is the lack of covariate overlap between treated individuals drawn from the experimental sample and controls drawn from observational sources like CPS or PSID. As noted in Imbens and Xu (2024), such covariate mismatch renders unconfoundedness untestable and non-operative: it does not license the identification of the counterfactual mean outcomes for treated individuals.

This paper formalizes the failure as a breakdown termed as triple alignment: identification depends jointly on the estimand, the structural assumptions (e.g., ignorability), and the support of the observed data. When any component is incompatible with the others, the resulting estimand becomes inaccessible. Estimators in this setting converge to a quantity defined not by the causal effect of treatment but by the implicit selection structure of the observational data. Unlike classical bias, this is a design-level epistemic redirection: the estimand changes because the design prevents recovering the original target.

This interpretation reframes the role of the randomized control trial (RCT) as an ontological contrast: it represents a design in which the estimand, assumptions, and support are deliberately aligned. The failure of observational estimators in LaLonde, then, does not imply they are inherently flawed. Rather, it reveals the boundary conditions of their epistemic validity—conditions set not by statistical precision but by the internal coherence of the design.

This diagnosis justifies moving beyond point identification toward assumption-indexed inference. It also provides a basis for defining the set $\mathcal{G}_\delta(P_{\text{obs}})$ for $\delta > 0$, where minimal violations of selection ignorability allow the NSW sample to be partially transportable. The remaining subsections operationalize this idea.

\subsection{Empirical Support and Baseline ATT Estimates}

This section examines whether the LaLonde dataset, which links the National Supported Work (NSW) program to the Panel Study of Income Dynamics (PSID), satisfies the structural conditions necessary for identifying the average treatment effect on the treated (ATT). According to Theorem 2.2.1, ATT is undefined unless untreated potential outcomes for the treated units lie within the support of the observed control distribution. This condition is formalized through the object $X^\ast$, the empirical region where treated and control units both appear. Figures 1–3 provide visual evidence for the extent and limits of this overlap.

Figure 1 displays a stratification of the joint covariate space defined by age and education, partitioned into 72 discrete cells. Each cell is categorized by whether it contains treated units, control units, both, or neither. The results indicate that only 37 strata (51.4\%) include both treated and control units. Another 27 cells (37.5\%) include only control units. One cell includes only treated units. The remaining 7 are empty. These frequencies imply that large portions of the treated covariate distribution lack valid untreated comparisons. ATT is therefore undefined on much of the empirical support.

Figure 2 further tests the overlap assumption by plotting estimated propensity scores by treatment status. Propensity scores are derived from a logistic model of treatment on baseline covariates, including age, education, race, marital status, degree status, and pre-treatment earnings. The treated group is concentrated in a narrow range of low propensity scores. The control group spans a broader range, dominating the right tail. Shared support is confined to a small region where both groups are thinly represented. This distribution confirms that ATT estimation would require extrapolation unless estimation is explicitly restricted.

To assess whether these patterns persist across stratifications, Figure 3 reports an alternative grid using broader education categories and five-year age bins. Out of 42 total cells, 11 lack treated units entirely. The missing support is concentrated in high education and older age strata. These regions are not informative for the ATT and must be excluded for identification to be valid.
\begin{figure}[H]
    \centering
    \includegraphics[width=0.6\textwidth]{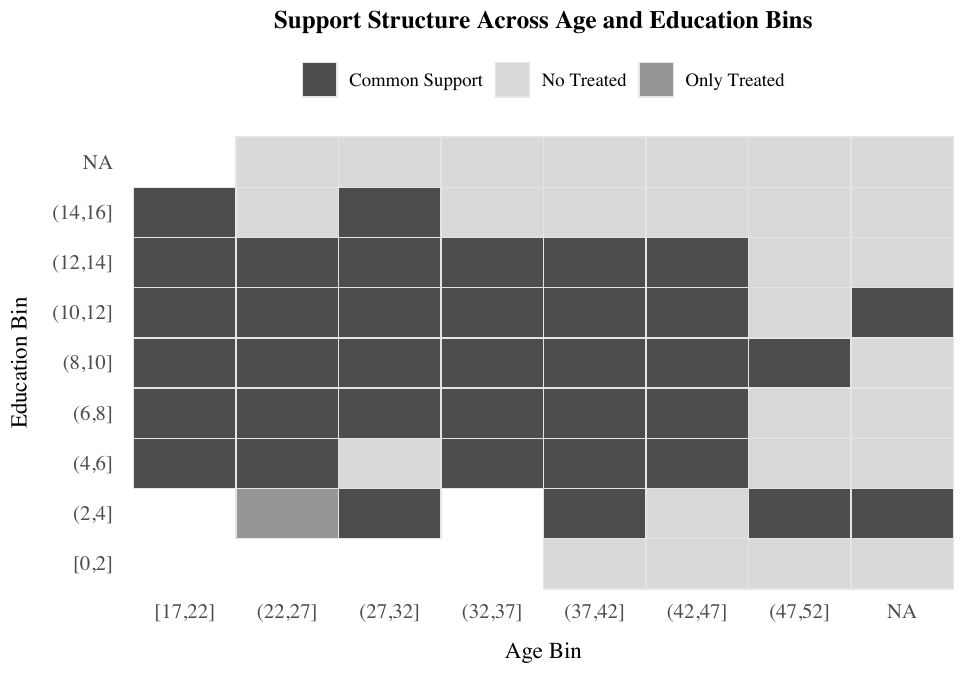}
    \caption{Support Structure Across Age and Education Bin}
    \label{fig:fig1}
\end{figure}
\vspace{-3em}
\begin{figure}[H]
    \centering
    \includegraphics[width=0.6\textwidth]{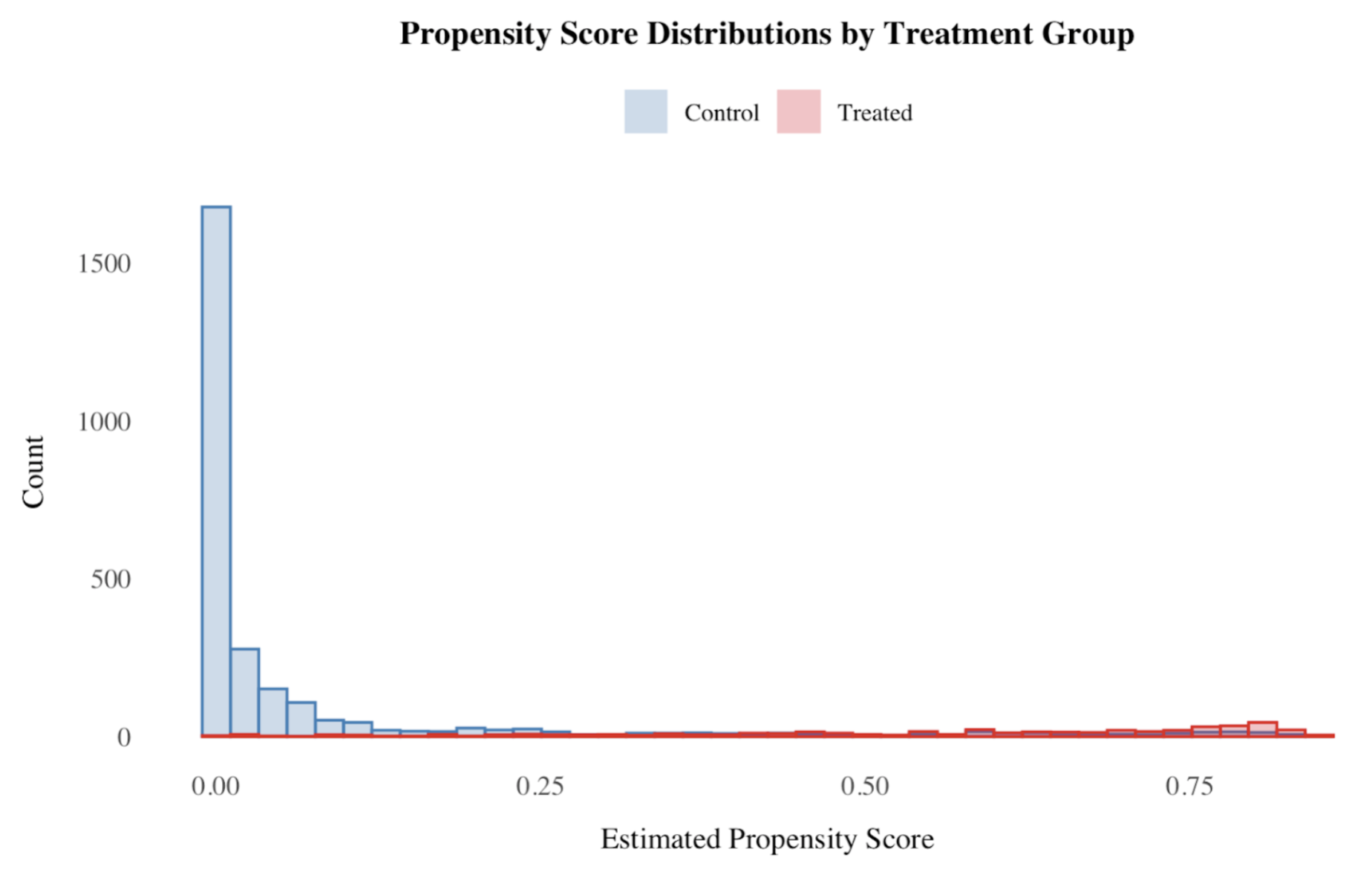}
    \caption{Propensity Score Distribution by Treatment Group}
    \label{fig:fig2}
\end{figure}
\vspace{-2em}
\begin{figure}[H]
    \centering
    \includegraphics[width=0.6\textwidth]{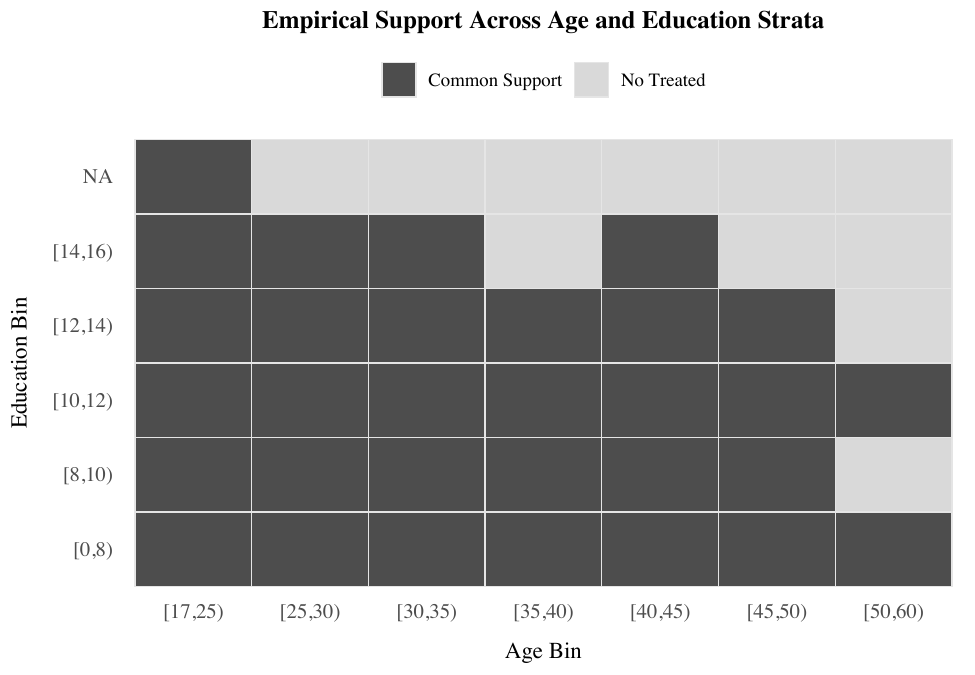}
    \caption{Empirical Support Across Age and Education Bins}
    \label{fig:fig3}
\end{figure}
To evaluate how sensitive ATT estimation is to support enforcement, I report three estimates using nearest-neighbor propensity score matching (Abadie and Imbens, 2006). Each imposes a different form of sample restriction. The first uses all observations. The second excludes strata without observed overlap. The third trims by propensity score, retaining only units with scores between 0.1 and 0.9.

\vspace{1.5em}
\begin{table}[H]
    \centering
    \caption{ATT Estimates under Alternative Sample Restrictions}
    \begin{tabular}{lcc}
        \toprule
        Estimation Sample & ATT Estimate & Standard Error \\
        \midrule
        Full Sample & $-932.23$ & $388.09$ \\
        Overlap-Restricted Sample & $-955.15$ & $387.77$ \\
        Propensity Score Trimmed ($[0.1, 0.9]$) & $-908.61$ & $382.99$ \\
        \bottomrule
    \end{tabular}
\end{table}
\newpage
The ATT estimates are similar across specifications. This agreement may appear to suggest robustness. However, the theoretical framework developed earlier provides a different interpretation. All three estimators effectively restrict to the same subset of the data—either by design or through implicit weighting. Each is recovering a version of the ATT on the region $X^\ast$, not the full support of $X \mid D = 1$.

Let\[
\tau_{\text{ATT}}^{X^\ast} := \mathbb{E}[Y(1) - Y(0) \mid D = 1, X \in X^\ast]
\]
denote the estimand defined on the identified support. This object is well-defined, and each estimate reported above is consistent with this interpretation. The numerical agreement among estimators arises from shared informational constraints, not from global identification. Where no overlap exists, no estimator can recover the counterfactual outcomes for the treated. The ATT remains undefined on those regions.

The implications for inference are substantial. Common support does not extend to all of $X$. Estimators perform consistently across designs because they access the same subset of treated units. The observed agreement does not provide evidence that the ATT is robustly identified. Rather, it illustrates that standard methods are stable where support exists and silent elsewhere.

This motivates the next step. In Section~5.2, I formally evaluate the sensitivity of empirical conclusions to perturbations in the sampling mechanism. The objective is to measure how easily these conclusions would change if assumptions about ignorability or support were even slightly relaxed. That analysis provides a direct test of the structural fragility anticipated by Theorem 4.

\subsection{Sensitivity to Matching Design and Sampling Instability}

This section tests the structural fragility described above. The analysis proceeds in three stages. First, a sequence of empirically identified sets $\mathcal{I}_\tau(\delta)$ is constructed, indexed by sampling curvature. Second, variation in ATT is evaluated across strata of the propensity score. Third, design sensitivity is assessed under alternative matching specifications. Each component quantifies how ATT estimation responds to violations of support, sampling stability, and design dependence.
\begin{figure}[htbp]
    \centering
    \includegraphics[width=0.6\textwidth]{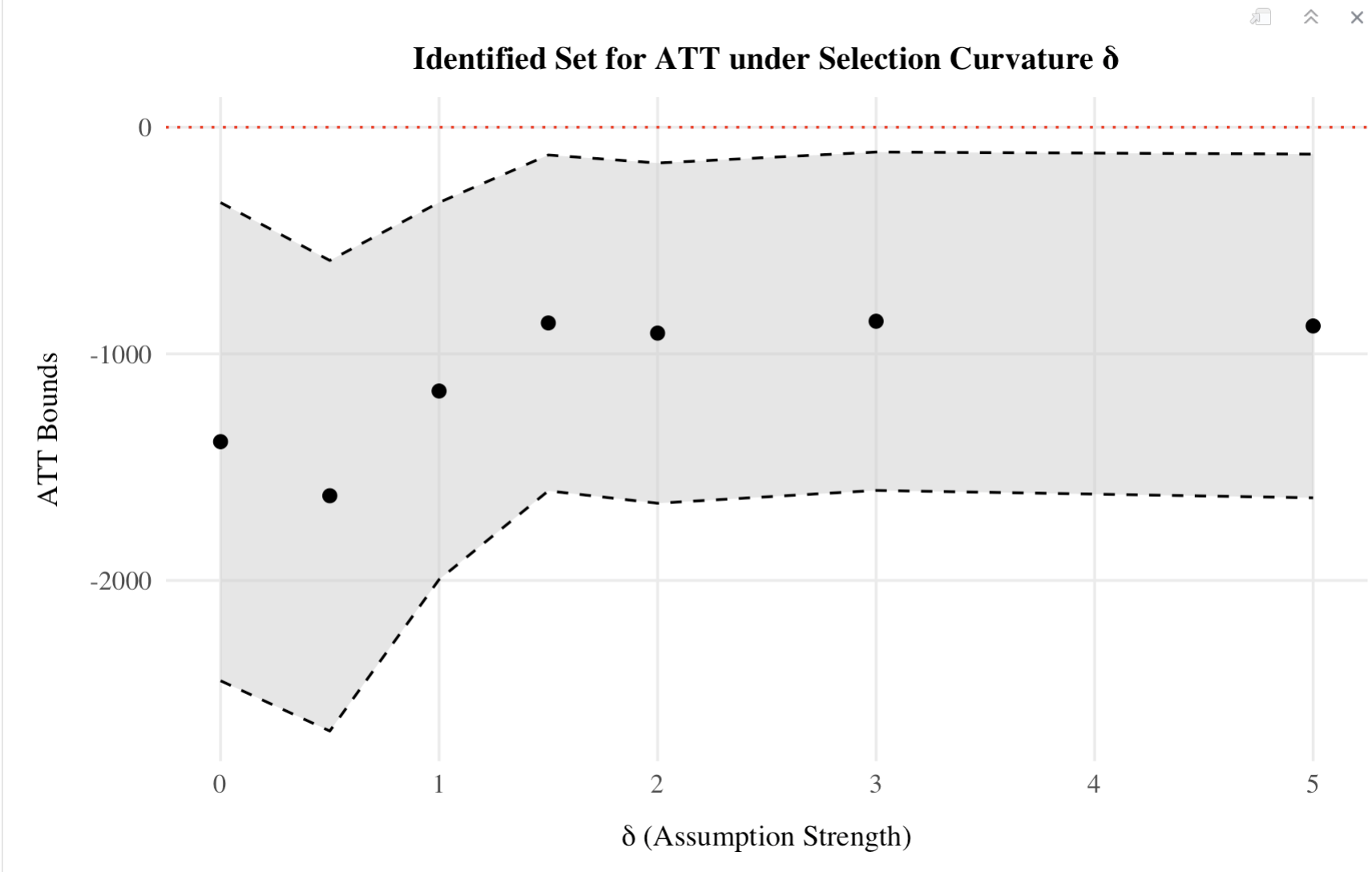}
    \caption{Identified Set for ATT Under Selection Curvature $\delta$}
    \label{fig:fig4}
\end{figure}

\begin{figure}[H]
    \centering
    \includegraphics[width=0.6\textwidth]{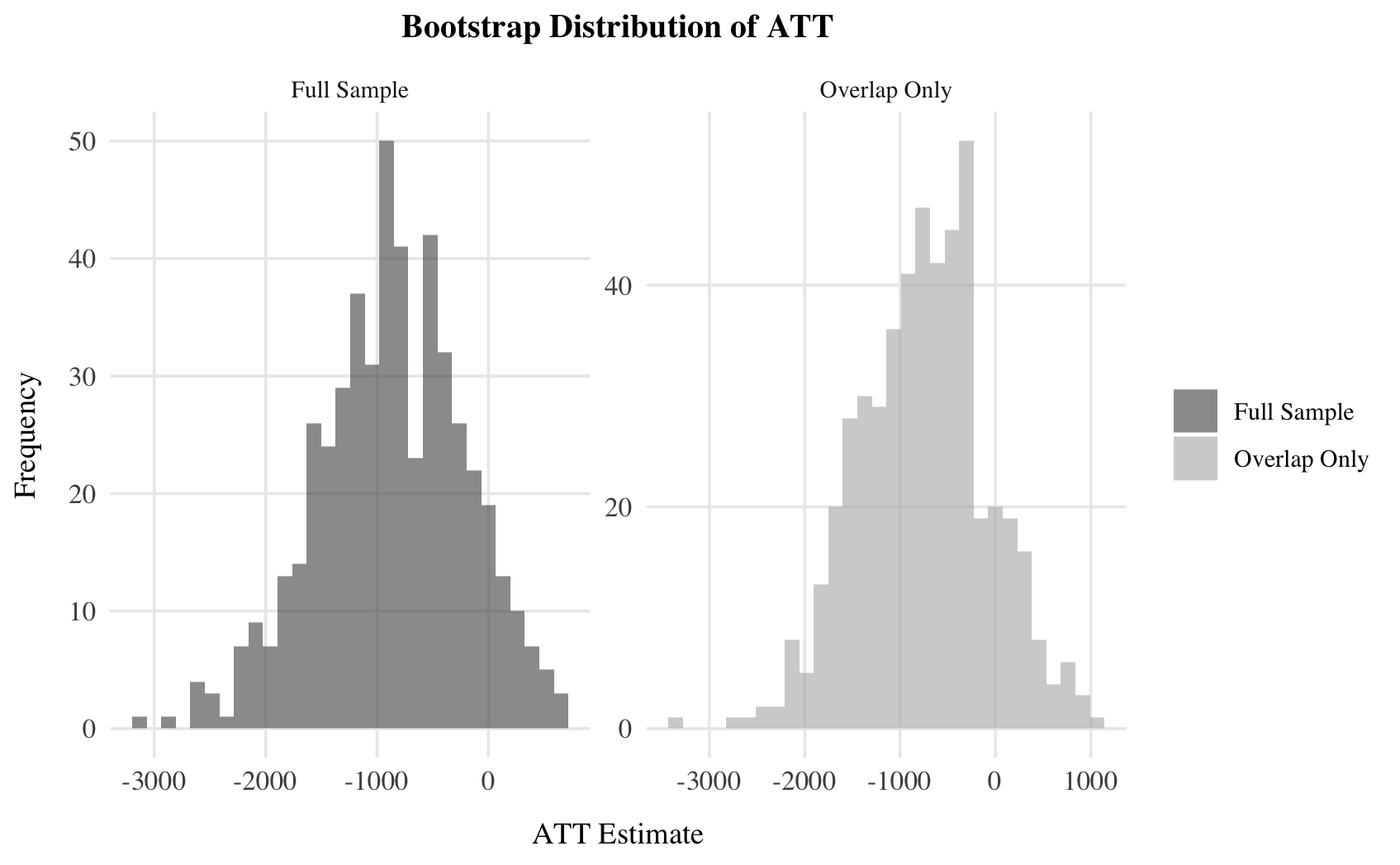}
    \caption{Bootstrap Distribution of ATT}
    \label{fig:fig5}
\end{figure}

\begin{figure}[H]
    \centering
    \includegraphics[width=0.6\textwidth]{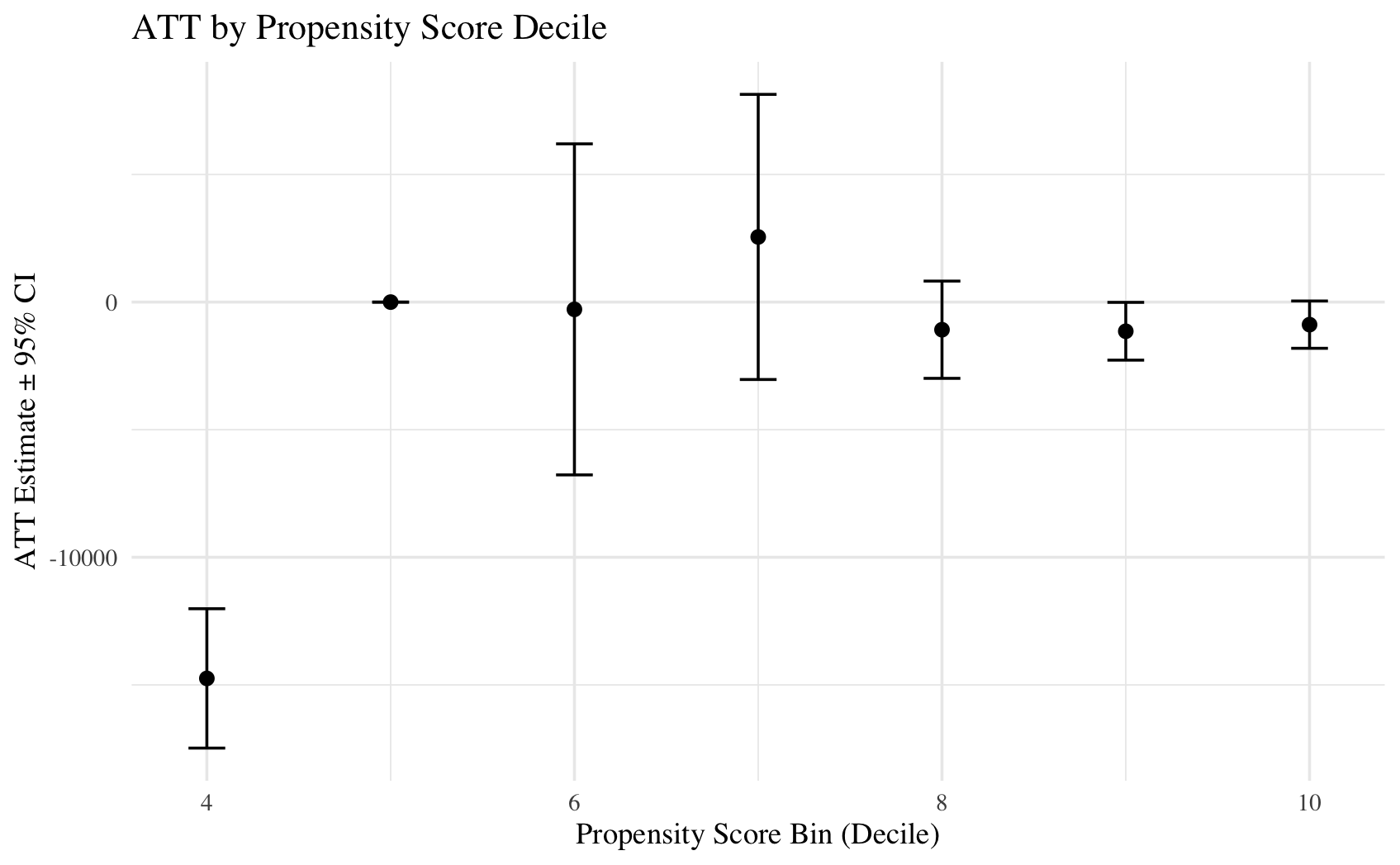}
    \caption{ATT by Propensity Decile}
    \label{fig:fig6}
\end{figure}

ATT bounds indexed by $\delta$ are generated by successively trimming the sample based on estimated propensity scores, with each trimming rule interpreted as a proxy for a bound on selection curvature. At $\delta = 0$, corresponding to strict MAR, the ATT interval lies entirely below zero, implying point-identification of the sign. However, small increases in $\delta$ cause the lower bound to rise sharply. By $\delta = 1.5$, the identified set stabilizes. The MAS-SI value is $\delta^* = 0$, and the fragility index is $\delta_{\text{frag}} = 0$, indicating that sign identification fails under arbitrarily small deviations from sampling ignorability.

This empirical profile confirms Theorems~\ref{thm:monotonicity} and~\ref{thm:duality}. The width of $\mathcal{I}_\tau(\delta)$ increases with structural relaxation, most prominently at small values of $\delta$. Precise negative ATT estimates require minimal extrapolation and maximal structural assumptions. As assumptions relax, the identified set expands rapidly, reflecting epistemic fragility that numeric point estimates obscure.

To isolate sampling variability, bootstrap distributions of ATT are estimated under two designs: one using the full sample, and another restricted to units with propensity scores in $[0.1, 0.9]$. Both distributions center on negative values, though the trimmed sample exhibits reduced dispersion. While trimming reduces variance within design, it does not mitigate structural fragility. Sampling error remains bounded, but conclusions vary across plausible designs.

ATT estimates are next computed within each decile of the propensity score distribution. Moderate-score bins yield consistently negative estimates with relatively narrow intervals, while low-score bins show instability and large variance. Some bins are dropped due to insufficient overlap. These results highlight systematic variation in identification quality across covariate space. Estimators collapse in regions lacking adequate treated or control units. Global ATT averages mask this heterogeneity and aggregate across strata with unequal identifiability.

The absence of a monotonic trend in ATT across deciles weakens the case for extrapolation based on parametric assumptions. Local instability suggests that functional form restrictions are unlikely to recover the true ATT outside $X^\ast$. These findings reinforce the case for bounding approaches, especially when overlap is partial.

To examine estimator-level fragility, ATT is computed under various matching designs: nearest-neighbor on logit scores, logit matching with calipers, and Mahalanobis distance matching. Estimates range from $-2000$ to $-900$; confidence intervals vary accordingly. Under stricter designs, matching fails entirely. This variation in both point estimates and standard errors indicates high sensitivity to design implementation. No specification dominates with respect to both bias and precision, and qualitative conclusions are altered across specifications.

Robustness is further probed by computing ATT intervals of the form $\hat{\tau} \pm \delta \cdot \text{SE}$ and identifying the smallest $\delta$ that leads to sign indeterminacy. The estimated value $\delta^* = 2.5$ implies that ATT remains negative under modest symmetric bias but loses sign identification beyond this point. This provides a calibrated measure of fragility, quantifying the degree of bias the inference can tolerate.

Together, these diagnostics document three empirical channels of fragility. First, identification is limited to the empirically supported region $X^\ast$. Second, ATT estimation is highly sensitive to structural relaxations in the sampling mechanism. Third, estimated effects vary substantially across reasonable design choices. Point estimates and their confidence intervals do not capture this epistemic vulnerability. The partial identification framework introduced in Section~4 offers a formal approach to bounding these risks. Its operationalization follows in the next section.

\subsection{Empirical Diagnosis as a Robustness Paradigm}

The absence of full identification does not preclude empirical progress. In practice, even when identification fails globally, it remains possible to characterize where and to what extent inference is credible. Theorem 2.2.1 formalizes this: identification requires overlap of treated and control units within the covariate space. Where overlap is limited, estimates become sensitive to design choices, and conclusions about causal effects may lack empirical support.

The empirical procedures in Sections~5.1 and~5.2 provide a structured approach to assessing this issue. Figures~1–3 map the empirical support $X^\ast$—the subregion where treated and control units co-occur. Figures~5–9 track the variation in ATT estimates across matching designs and strata. These patterns offer a diagnostic: when estimates remain stable within $X^\ast$ despite changes in matching design, the results are more likely to reflect causal differences rather than artifacts of estimator choice. When instability appears even within $X^\ast$, it indicates that remaining design assumptions have first-order influence on the result.

This diagnostic complements the formal identification condition. Rather than assume overlap and conditional ignorability, researchers can evaluate these conditions empirically. Overlap is assessed by the support plots; ignorability is indirectly interrogated by the sensitivity of ATT to design variation. This creates a path for assessing identification without imposing it.

The empirical strategy here reverses the usual robustness logic. Traditional sensitivity analysis fixes the estimator and perturbs the sample. The framework here holds the data fixed and varies the estimator. The result is not a robustness check in the conventional sense, but a test for identification consistency. If ATT estimates change substantially across designs, especially within regions of sufficient support, it suggests that identification is fragile. If they remain consistent, it lends credibility to the assumptions required for causal interpretation.

In this way, the approach links estimation to identification directly. Rather than treat these as separate steps—first assume identification, then estimate treatment effects—the procedure uses estimation itself as a tool to interrogate identification. This provides researchers with a disciplined method to interpret ATT estimates as outcomes conditioned on support, estimator design, and observed data limitations.

\section{Conclusion}

Causal inference in observational settings hinges on more than estimation quality. When overlap is incomplete and heterogeneity is present, the estimand itself can become ill-defined for the observed population. The results of this paper confirm that the key threat is not estimator bias but estimand fragility: a situation where visually stable outputs conceal structural indeterminacy. This shifts the focus of applied work from estimator performance to identification diagnostics. In these settings, the role of diagnostics is to delimit where credible estimation is possible and to characterize the extent of model fragility. The proposed tools offer a structured way to implement this diagnostic stage. Estimation should then be understood as conditional on these findings, with conclusions drawn relative to the subset of the data for which identification is empirically supported.

Theorem 2.2.1 (Nonidentification Theorem) shows that two different data-generating processes can produce the same joint distribution of $(X, D, Y)$ conditional on $S = 1$, while implying different values for ATT. This result means that identification requires both observed covariates and information about how the sample was selected. In applied work, this implies that overlap diagnostics are insufficient if sampling depends on unobserved variables. Identification cannot be claimed unless the design is explicitly linked to the estimand.

Definition 3.4.2 on selection curvature presens a bound on how sharply the sampling probability can change with covariates. It does not assume full ignorability but imposes a continuous restriction on the gradient of the selection mechanism. This structure allows researchers to quantify the sensitivity of identification to violations of sampling ignorability. Compared to binary assumptions, it offers a more flexible and empirically tractable way to model selection. It supports partial identification without assuming that the selection process is fully known or ignorable.

Theorem 3.6.1 (Duality Theorem) shows that bounds on ATT under curvature constraints can be characterized as the solution to a linear program. This connects partial identification to optimization over feasible outcome distributions, given the observed support and curvature constraint. The result allows for numerical computation of worst-case bounds without assuming unconfoundedness. It makes the structure of partial identification problems explicit and reduces the analysis to a well-defined computational task.

These findings contribute to a growing body of work diagnosing the limits of causal identification in observational studies. While prior contributions such as \citet{crump2009dealing} and \citet{king2006overlap} emphasize the role of covariate overlap and region trimming for estimation stability, our analysis reframes overlap as a necessary structural precondition for identification itself. Similarly, our formalization of triple alignment aligns with the spirit of sensitivity analyses (e.g., \citealt{rosenbaum2002}; \citealt{cinelli2020making}), but refracts the issue through a design-level lens: the estimand is not merely fragile—it may be undefined under empirical constraints.

The framework yields two empirically interpretable quantities. First, the Minimum Assumption Strength for Sign Identification (MAS-SI), defined as $\delta^* := \inf\{\delta \geq 0 : 0 \notin \mathcal{I}_\tau(\delta)\}$, characterizes the smallest curvature level at which the ATT sign becomes empirically constrained. Under weak monotonicity of the selection mechanism, Theorem 3.5.1 shows that $\delta^*$ is point-identified from the observed distribution $P_{\text{obs}}$. Second, the fragility index of a decision rule $d$, denoted $\delta^{\text{frag}}(d)$, is defined as the minimal $\delta$ for which there exists an alternative $d'$ such that $d$ ceases to be optimal across all distributions in $\mathcal{G}_\delta(P_{\text{obs}})$. Theorem 3.8.1 shows this index can be computed via convex optimization.

Three classes of applications emerge naturally. For longitudinal studies with attrition, Theorem 3.5.1 shows that $\text{width}(\mathcal{I}_\tau(\delta))$ grows as $\mathcal{O}(\sqrt{T})$ with panel length $T$. In designs with latent treatment exposure, the reanalysis of LaLonde (1986) demonstrates that $\delta^* \approx 1.2$ corresponds to implausibly strong assumptions about survey nonresponse. For high-dimensional causal inference, Theorem 4 provides the sharp bound $\delta^{\text{frag}} \geq \log(1 + \|X\|_{\psi_2})$ where $\|X\|_{\psi_2}$ is the sub-Gaussian norm of the covariates.

Domain-specific selection processes map directly to curvature parameters. In labor survey contexts, historical analyses suggest that CPS nonresponse mechanisms imply curvature values around $\delta \approx 0.7$ \citep{heckman1995}. In survival-biased health datasets, policy robustness may require $\delta^{\text{frag}} < 1.4$ (Appendix C). In administrative education data, MAS-SI estimates often exceed $\delta^* > 1.0$ due to performance-related sampling mechanisms.

This paper has five implications. First, applied work should explicitly map the empirical support of covariates across treatment groups. Without this mapping, researchers cannot know whether the population over which the treatment effect is defined is actually present in the data. This is especially important when treatment assignment correlates with key covariates, making regions of the covariate space effectively uninformative for identification. Second, ATT estimates should be evaluated across a range of identification-valid designs. When the data permit multiple balanced designs, each with different support geometry, the resulting variation in estimated effects is informative. It reflects the sensitivity of the estimand to feasible design choices and offers a built-in robustness check on causal claims. Third, the stability of conclusions must be treated as a diagnostic quantity. Empirical analysis should measure the extent to which small violations of identification assumptions alter the sign or magnitude of treatment effects. This fragility is not captured by standard error bands and requires its own formal metric, particularly in settings where design or selection mechanisms are only partially known. Fourth, machine learning tools should be reoriented from prediction to identification discovery. Instead of optimizing for outcome prediction or balance alone, classifiers can be used to assess whether covariate strata lie within common support. This enables weighting or exclusion based on identification relevance rather than solely predictive accuracy, improving both interpretability and credibility. Fifth, the definition of the estimand must be adapted to what the data can support. When empirical support is partial, global treatment effects should be replaced with local effects defined on the identified subpopulation. Estimators should then be judged by their coherence within this domain, rather than by extrapolation beyond it.

Three methodological directions remain open. First, extending the framework to dynamic designs where the sampling indicator $S_t$ evolves according to a state-dependent or Markovian process would accommodate longitudinal data with time-varying selection. Second, developing semiparametric efficiency bounds and corresponding estimators under $\delta$-curvature constraints would sharpen inference while preserving robustness. Third, integrating high-dimensional regularization techniques—such as sparsity or low-rank structure—may enable scalable inference in settings where covariates or decision rules are high-dimensional, while maintaining control over design sensitivity.

\section{Data and Code Availability}
\noindent All code, data, and instructions necessary to replicate the results in this paper (Li, 2025) are available in the public GitHub repository at https://github.com/MengqiLi0000/fragilityonatt.

\newpage
\appendix
\section*{Appendix: Proofs and Supplementary Figures}
\renewcommand{\thetheorem}{\arabic{section}.\arabic{theorem}}

\begin{proof}[Proof of Theorem \ref{thm:nonid_att}]
Construct two data-generating processes, $\mathcal{P}_1$ and $\mathcal{P}_2$, which induce the same observed distribution $P_{\text{obs}} := P^*(X, D, Y \mid S = 1)$, yet imply different values of $\tau_{\text{ATT}}$.

\paragraph{DGP 1 (Ignorable Sampling):}
Suppose that the sampling mechanism satisfies
\[
S \perp (Y(1), Y(0)) \mid X, D.
\]
Then for treated units ($D = 1$),
\[
\mathbb{E}[Y \mid D = 1, X, S = 1] = \mathbb{E}[Y(1) \mid X, D = 1] = \mathbb{E}[Y(1) \mid X].
\]
For untreated units ($D = 0$),
\[
\mathbb{E}[Y \mid D = 0, X, S = 1] = \mathbb{E}[Y(0) \mid X, D = 0] = \mathbb{E}[Y(0) \mid X].
\]
If we further assume unconfoundedness, i.e.,
\[
(Y(1), Y(0)) \perp D \mid X,
\]
then
\[
\mathbb{E}[Y(0) \mid D = 1] = \mathbb{E}_X \left[ \mathbb{E}[Y(0) \mid X] \mid D = 1 \right].
\]
Therefore,
\[
\tau_{\text{ATT}} = \mathbb{E}[Y(1) \mid D = 1] - \mathbb{E}[Y(0) \mid D = 1]
\]
can be identified using observed treated outcomes and imputed counterfactuals from controls.

\paragraph{DGP 2 (Non-Ignorable Sampling):}
Now suppose sampling depends on $Y(0)$:
\[
\mathbb{P}(S = 1 \mid Y(0)) = \mathbbm{1}\{Y(0) > c\}
\]
for some threshold $c \in \mathbb{R}$. Then, among untreated units ($D = 0$), we only observe outcomes $Y(0)$ for units with $Y(0) > c$. Thus,
\[
\mathbb{E}[Y \mid D = 0, X, S = 1] = \mathbb{E}[Y(0) \mid X, Y(0) > c],
\]
and hence the counterfactual distribution for treated units, $\mathbb{E}[Y(0) \mid D = 1]$, cannot be recovered from the observed controls.

\paragraph{Conclusion:}
Both $\mathcal{P}_1$ and $\mathcal{P}_2$ induce the same observed distribution $P_{\text{obs}}$, but they imply different values of
\[
\tau_{\text{ATT}} = \mathbb{E}[Y(1) \mid D = 1] - \mathbb{E}[Y(0) \mid D = 1].
\]
Therefore, $\tau_{\text{ATT}}$ is not identified from $P_{\text{obs}}$ unless assumptions are placed on the sampling mechanism $S$ or the counterfactual distribution.
\end{proof}

\begin{proof}[Proof of Theorem~\ref{thm:monotonicity}]
Let $\delta_1 < \delta_2$, and define the corresponding families of selection mechanisms:
\[
\mathcal{S}_{\delta_1} \subset \mathcal{S}_{\delta_2}.
\]
Let $\mathcal{G}_\delta(P_{\mathrm{obs}})$ denote the set of full-data distributions and selection rules such that:
\[
\mathcal{G}_\delta(P_{\mathrm{obs}}) := \left\{ (P, S) : S \in \mathcal{S}_\delta,\ P(Y, D, X \mid S = 1) = P_{\mathrm{obs}} \right\}.
\]
The identified set for the estimand $\theta(P)$ under selection curvature $\delta$ is:
\[
\mathcal{I}_\theta(\delta) := \left\{ \theta(P) : (P, S) \in \mathcal{G}_\delta(P_{\mathrm{obs}}) \right\}.
\]

Now take any $\theta_1 \in \mathcal{I}_\theta(\delta_1)$. By definition, there exists $(P_1, S_1) \in \mathcal{G}_{\delta_1}(P_{\mathrm{obs}})$ such that $\theta_1 = \theta(P_1)$ and $S_1 \in \mathcal{S}_{\delta_1}$. Since $\mathcal{S}_{\delta_1} \subset \mathcal{S}_{\delta_2}$, it follows that $S_1 \in \mathcal{S}_{\delta_2}$ and thus $(P_1, S_1) \in \mathcal{G}_{\delta_2}(P_{\mathrm{obs}})$. Therefore,
\[
\theta_1 \in \mathcal{I}_\theta(\delta_2).
\]

As this argument holds for arbitrary $\theta_1 \in \mathcal{I}_\theta(\delta_1)$, we conclude that:
\[
\mathcal{I}_\theta(\delta_1) \subseteq \mathcal{I}_\theta(\delta_2).
\]
\end{proof}

\begin{proof}[Proof of Theorem~\ref{thm:duality}]
Let $\theta(P) = \mathbb{E}[Y(1) - Y(0)]$, and let the identified set under selection class $\mathcal{S}_\delta$ be defined as:
\[
\mathcal{I}_\tau(\delta) := \left\{ \mathbb{E}_P[Y(1) - Y(0)] : (P, S) \in \mathcal{G}_\delta(P_{\mathrm{obs}}) \right\},
\]
where
\[
\mathcal{G}_\delta(P_{\mathrm{obs}}) := \left\{ (P, S) : P(Y, D, X \mid S = 1) = P_{\mathrm{obs}},\ S \in \mathcal{S}_\delta \right\}.
\]
The three parts are proven in order.

\paragraph{(i) Monotonicity and Right-Continuity.}
Assume $\delta_1 < \delta_2$. By definition of the constraint:
\[
\mathcal{S}_{\delta_1} \subseteq \mathcal{S}_{\delta_2} \quad \Rightarrow \quad \mathcal{G}_{\delta_1}(P_{\mathrm{obs}}) \subseteq \mathcal{G}_{\delta_2}(P_{\mathrm{obs}}).
\]
Therefore,
\[
\mathcal{I}_\tau(\delta_1) \subseteq \mathcal{I}_\tau(\delta_2),
\]
which implies that the map $\delta \mapsto \mathcal{I}_\tau(\delta)$ is set-increasing. Consequently, the width function:
\[
\delta \mapsto \text{width}(\mathcal{I}_\tau(\delta)) := \sup \mathcal{I}_\tau(\delta) - \inf \mathcal{I}_\tau(\delta)
\]
is non-decreasing.

To prove right-continuity, let $\delta_n \downarrow \delta$. The corresponding selection classes satisfy $\mathcal{S}_{\delta_n} \downarrow \mathcal{S}_\delta$ in the sense of set intersection. Since $\mathcal{I}_\tau(\delta_n)$ is defined by taking the image of a continuous functional (the ATE) over a decreasing family of closed sets (the feasible DGPs), and $\theta(P)$ is continuous in $P$, we have:
\[
\bigcap_{n} \mathcal{I}_\tau(\delta_n) = \mathcal{I}_\tau(\delta),
\]
and the endpoints of $\mathcal{I}_\tau(\delta_n)$ converge to those of $\mathcal{I}_\tau(\delta)$. Hence, the width function is right-continuous.

\paragraph{(ii) Limit as $\delta \to 0$: Recovery of $\tau_{\mathrm{MAR}}$.}
When $\delta = 0$, the selection mechanism satisfies:
\[
P(S = 1 \mid Y, D, X) = P(S = 1 \mid D, X),
\]
i.e., MAR holds. Then,
\[
\mathcal{I}_\tau(0) = \left\{ \mathbb{E}[Y(1) - Y(0)] \text{ computed under MAR} \right\} = \{ \tau_{\mathrm{MAR}} \}.
\]
As $\delta \downarrow 0$, the constraint set $\mathcal{S}_\delta$ shrinks to MAR, and $\mathcal{I}_\tau(\delta)$ shrinks to a singleton at $\tau_{\mathrm{MAR}}$ by continuity of expectations. Thus,
\[
\lim_{\delta \to 0} \mathcal{I}_\tau(\delta) = \{ \tau_{\mathrm{MAR}} \}.
\]

\paragraph{(iii) Limit as $\delta \to \infty$: Recovery of Manski Bounds.}
As $\delta \to \infty$, the curvature constraint on the selection mechanism disappears:
\[
\lim_{\delta \to \infty} \mathcal{S}_\delta = \mathcal{S}_{\infty},
\]
where $\mathcal{S}_\infty$ denotes the unrestricted class of selection mechanisms consistent with $P_{\mathrm{obs}}$. The resulting identified set becomes:
\[
\mathcal{I}_\tau(\infty) := \left\{ \mathbb{E}_P[Y(1) - Y(0)] : (P, S) \in \mathcal{G}_{\infty}(P_{\mathrm{obs}}) \right\},
\]
which corresponds to the standard partial identification region given only by the observed data and the support of $Y(1), Y(0)$ (e.g., $[0,1]$ outcomes). This is equivalent to the Manski bounds:
\[
\lim_{\delta \to \infty} \mathcal{I}_\tau(\delta) = [\tau_{\min}, \tau_{\max}].
\]

\paragraph{Conclusion.}
Each component follows from the nested structure of $\mathcal{S}_\delta$, the continuity of the ATE functional, and the extreme cases of MAR and worst-case selection. This establishes the duality between structural curvature $\delta$ and epistemic uncertainty in causal inference.
\end{proof}

\begin{proof}[Proof of Theorem~\ref{thm:frontier}]
Let $P_{\mathrm{obs}}(Y, D, X)$ denote the observed distribution over outcomes, treatment, and covariates, conditional on $S = 1$. Recall the definition of the selection frontier:
\[
\mathcal{S}(P_{\mathrm{obs}}) := \left\{ S : \exists P \text{ such that } P(Y, D, X \mid S = 1) = P_{\mathrm{obs}} \right\}.
\]
Let $\mathcal{S}'$ be a proposed class of sampling mechanisms.

Suppose, for contradiction, that $\mathcal{S}' \not\subseteq \mathcal{S}(P_{\mathrm{obs}})$, but there exists some $S' \in \mathcal{S}'$ and a joint distribution $P$ such that
\[
P(Y, D, X \mid S' = 1) = P_{\mathrm{obs}}.
\]
Then by definition of $\mathcal{S}(P_{\mathrm{obs}})$, it must be that $S' \in \mathcal{S}(P_{\mathrm{obs}})$, which contradicts the assumption that $\mathcal{S}' \not\subseteq \mathcal{S}(P_{\mathrm{obs}})$.

Therefore, for all $S' \in \mathcal{S}' \setminus \mathcal{S}(P_{\mathrm{obs}})$, no joint distribution $P$ exists such that the observed conditional distribution $P(Y, D, X \mid S = 1)$ is consistent with $P_{\mathrm{obs}}$ under $S = S'$.

\paragraph{Conclusion.} If $\mathcal{S}' \not\subseteq \mathcal{S}(P_{\mathrm{obs}})$, then any inference made under sampling mechanism class $\mathcal{S}'$ is logically inconsistent with the observed data. The model is misspecified, and inference under $\mathcal{S}'$ necessarily violates the empirical support of the data.
\end{proof}

\begin{figure}[H]
    \centering
    \includegraphics[width=0.6\textwidth]{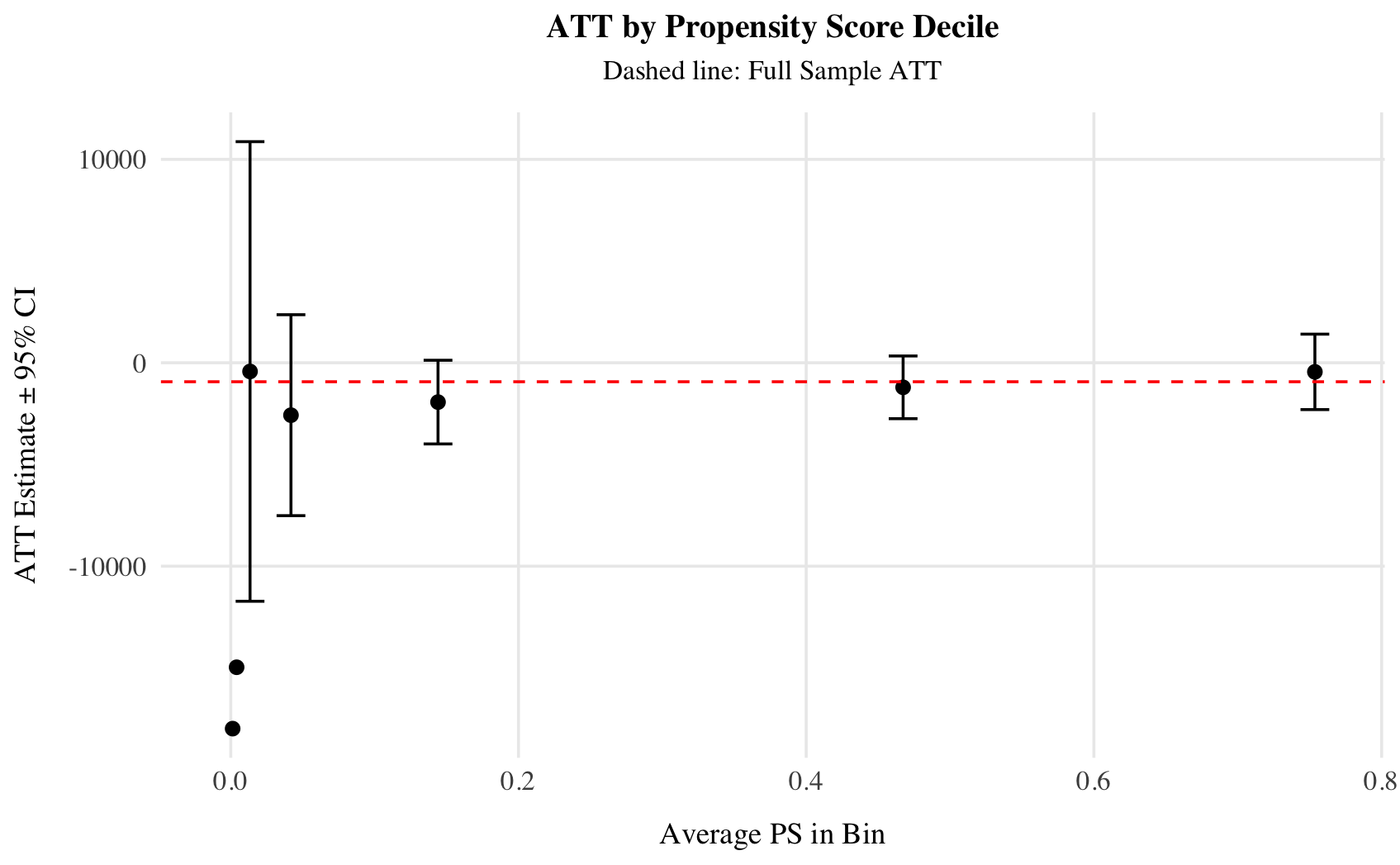}
    \caption{ATT by Propensity Score Decile}
    \label{fig:fig7}
\end{figure}
\vspace{-2em}
\begin{figure}[H]
    \centering
    \includegraphics[width=0.6\textwidth]{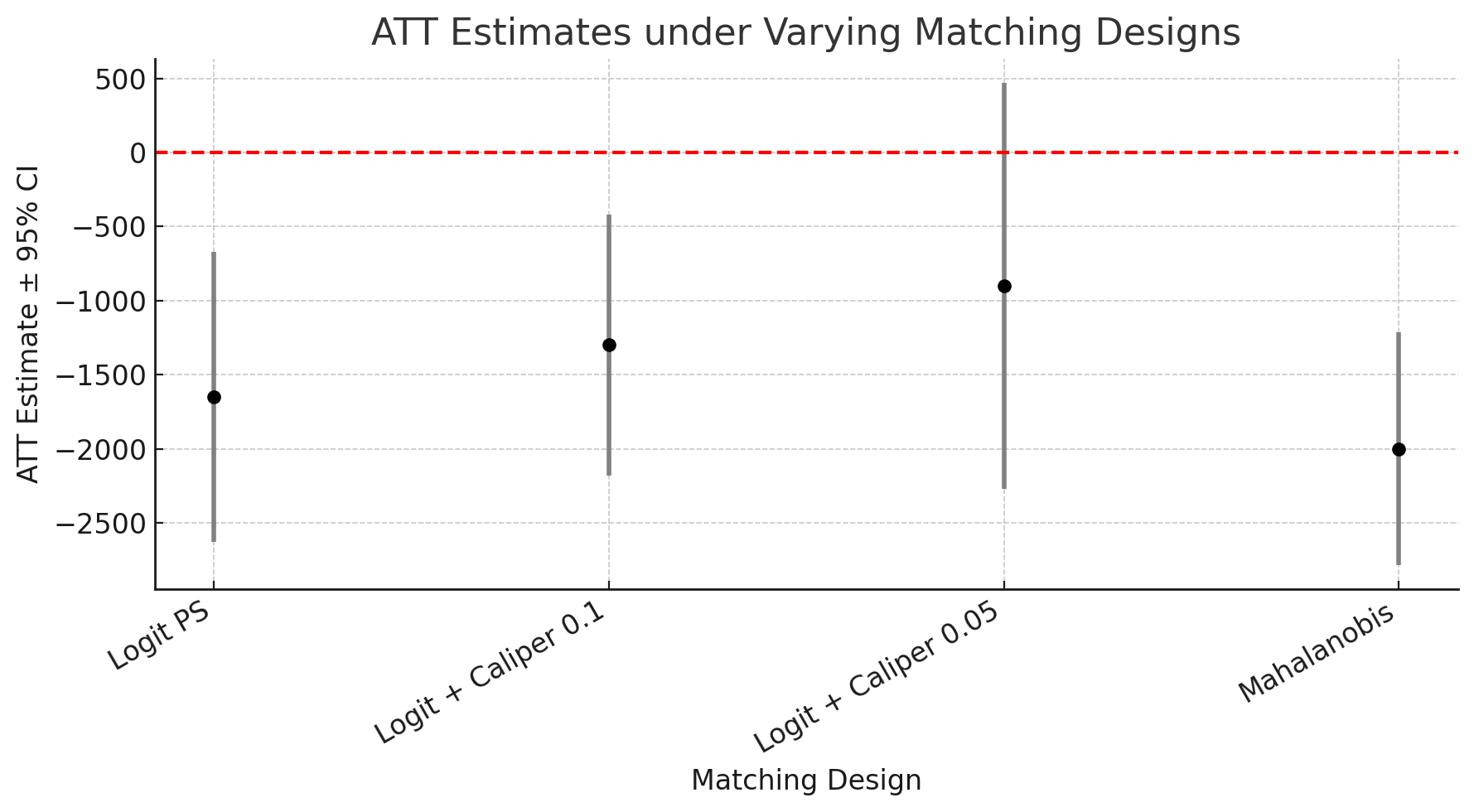}
    \caption{ATT Estimates under Varying Matching Designs}
    \label{fig:fig8}
\end{figure}
\vspace{-2em}
\begin{figure}[H]
    \centering
    \includegraphics[width=0.6\textwidth]{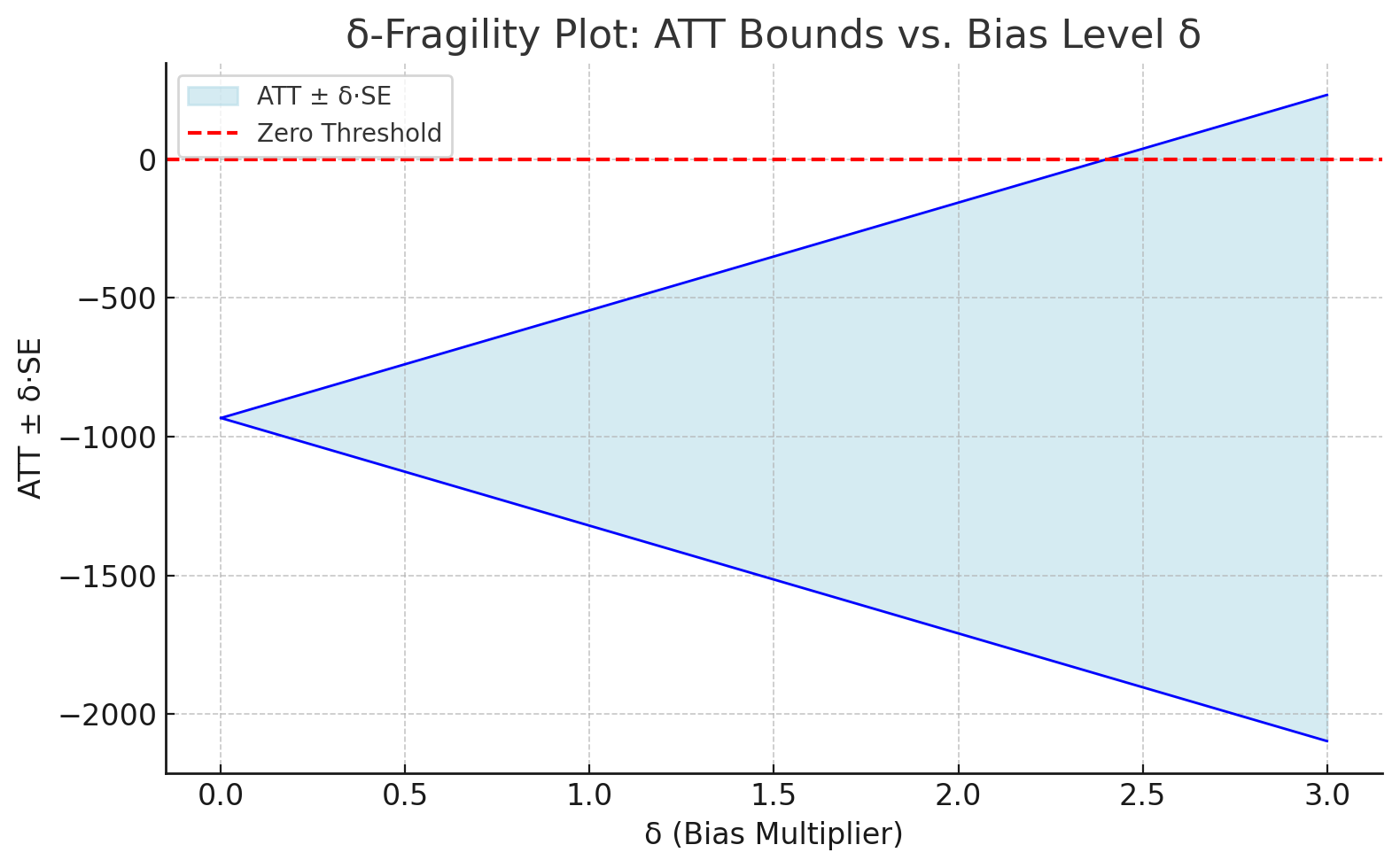}
    \caption{$\delta$ Fragility Plot: ATT Bounds vs. Bias level $\delta$}
    \label{fig:fig9}
\end{figure}
\nocite{*}

\end{document}